%% file: main.tex
\documentclass[10pt]{article}
\usepackage[T1]{fontenc}
\usepackage[utf8]{inputenc}
\usepackage{lmodern}
\usepackage[margin=0.95in]{geometry}
\usepackage{microtype}
\usepackage{amsmath,amssymb,amsthm}
\usepackage{booktabs,array,tabularx,float}
\usepackage[numbers,sort&compress]{natbib}
\usepackage[hidelinks,breaklinks]{hyperref}
\usepackage{url,enumitem,listings}
\hypersetup{pdftitle={Approval Integrity and Recovery in LLM Answer Publication},pdfauthor={Faruk Alpay; Taylan Alpay},pdfsubject={Approval integrity, authorization freshness, and publication enforcement},pdfkeywords={approval integrity, complete mediation, authorization freshness, TOCTOU, prompt injection, cs.CR}}
\setlist{itemsep=1pt,topsep=3pt}
\newtheorem{proposition}{Proposition}
\newcommand{\code}[1]{\texttt{#1}}
\newcommand{\Allow}{\mathsf{Allow}}
\newcommand{\Hash}{\mathsf{H}}
\newcommand{\Trim}{\mathsf{trim}}
\newcommand{\sources}[1]{\citep{#1}}
\input{result-macros.tex}

\title{Approval Integrity and Recovery\\in LLM Answer Publication}
\author{Faruk Alpay\textsuperscript{1}\thanks{Correspondence: \href{mailto:alpay@lightcap.ai}{alpay@lightcap.ai}.}\quad Taylan Alpay\textsuperscript{2}\\[6pt]
\small\textsuperscript{1}Department of Computer Engineering, Bah\c{c}e\c{s}ehir University, Istanbul, T\"urkiye\\
\small\textsuperscript{2}Department of Aerospace, University of Turkish Aeronautical Association, Ankara, T\"urkiye\\[4pt]
\small\href{mailto:faruk.alpay@bahcesehir.edu.tr}{faruk.alpay@bahcesehir.edu.tr}\\
\small\href{mailto:s220112602@stu.thk.edu.tr}{s220112602@stu.thk.edu.tr}}
\date{September 10, 2026}
\begin{document}
\maketitle
\begin{abstract}
Publication integrity in LLM systems requires binding approved content to its current authorization context. We examine exact-content binding, authorization freshness and checkpoint recovery in Lightcap's publication enforcement mechanism. On 900 independently human-annotated RAGTruth responses from 150 source tasks, three dated Ministral models and a same-model direct-grounding baseline yield 3,600 assessments. The production response-act checker instantiated with 14B accepts \LCFourteenUnsafeCount{} of \UnsupportedCount{} unsupported-labelled answers; the direct baseline accepts \DirectUnsafeCount{}. Supported-answer retention is \LCFourteenSupportedRetention\% and \DirectSupportedRetention\%, respectively. An exact promotion--correction identity tracks error through 100 chronological 3B--14B--8B--14B answer trajectories. Among 65 initially approved answers, the final stateful recheck--recovery policy increases exact-match error by 9.23 percentage points relative to the initial checkpoint (95\% article-clustered interval $[-1.72,19.61]$). Controlled evidence-fingerprint changes expose asymmetric freshness enforcement between publication and recovery. A separate BIPIA prompt-injection experiment records zero target insertions among 266 valid editor outputs. External Hugging Face calibration experiments transfer retrieval models from ArguAna to SciFact and NFCorpus, and diagnostic decision rules from Thunderbird to BGL, distinguishing probability calibration from ranking changes. The measurements separate semantic false approval, stale authorization and recovery-induced error at executable publication boundaries.
\end{abstract}

\section{Introduction}
Semantic approval and publication authorization protect distinct objects. A model assesses whether an answer satisfies a request using evidence; an application decides which bytes may cross its output boundary. An unsupported answer can cross because the assessment is wrong, or because the application publishes a different answer under the earlier assessment. The first failure concerns semantic decision quality. The second concerns integrity of approval use.

Retrieved documents, tool responses, and optional transformations enlarge the gap between these decisions. RAG and tool-using agents routinely introduce external text into multi-stage computation \sources{rag,react,toolformer}. Indirect prompt injection exploits this route by placing attacker instructions in data \sources{indirect}. For answer publication, an attacker can seek insertion of an advertisement, a fraudulent instruction, or an unsupported claim during a rewrite. If the original approval survives that rewrite, control over low-trust content acquires authority over the final answer. Classical complete mediation motivates checking the content--approval association at the sink \sources{saltzer}.

Exact-content binding offers an integrity control, but its utility depends on the surrounding semantic decisions. Rejecting a changed answer can withhold a useful supported checkpoint; restoring a mistakenly approved checkpoint preserves its error. These outcomes motivate joint measurement of the checker and publication policy. Our primary finding is that a production response-act checker satisfies its structured interface while admitting almost all human-labelled unsupported answers. Its performance changes substantially when replaced by a simpler grounding-only prompt at the same model size.

We make three contributions. First, we measure the unchanged 3B, 8B and 14B response-act checker against independent human labels and a paired 14B baseline, with source-task-clustered uncertainty. Second, we derive a sequential promotion--correction identity and evaluate publication policies through exhaustive response pairing, actual four-stage revision and controlled evidence-state transitions. Third, we test external score calibration and diagnostic selection using source-pinned retrieval and log datasets. The executable artifact links semantic validation, approval freshness and diagnostic observability to their distinct measured outcomes.

\section{Security Model and Related Work}
\subsection{Assets, adversary and trust boundary}
The protected assets are the association of approval with exact answer content and current scope, and the availability of an eligible previously approved answer. The integrity objective is that every published answer has a durable applicable receipt. The recovery objective is to preserve an eligible checkpoint after an unsuccessful optional transformation.

The temporal environment may also advance the trusted evidence version after approval, through correction, withdrawal or replacement. The freshness experiment treats this transition as an input to the authorization boundary. An authoritative update can invalidate the basis of an earlier approval even when the answer bytes and user request remain unchanged. This environment-state transition is distinct from the attacker-controlled editor experiment below.

The adversary controls an additional external tool excerpt consumed by an optional editor. The legitimate user request, original evidence, checker code, receipt store and publication sink remain trusted. The attacker seeks a specified contaminating sentence in the released answer. Control of tool data may induce a model rewrite, but cannot directly mint an authoritative receipt. Our attack experiment uses two fixed insertion payloads from the independently released BIPIA test set \sources{bipia}. Generation, reassessment and sink checks operate on the resulting real model output. The measured attack surface is this explicit editor interface; the experiment's endpoint is targeted answer contamination.

Two routes cross the boundary. In \emph{approval substitution}, the editor produces $b_1\ne b_0$ and the application transfers approval for $b_0$ to $b_1$. In \emph{semantic laundering}, the checker issues a new approval for an attacker-contaminated $b_1$. Content binding mediates the first route; the second exposes the semantic verifier's residual error. Recoverable output distinguishes integrity enforcement from an abstain-all policy.

\subsection{Prior mechanisms and empirical position}
Complete mediation and robust declassification establish the broader distinction between low-trust influence and release authority \sources{saltzer,robust}. CaMeL separates privileged control from quarantined processing, FIDES tracks information flow, AgentSpec imposes runtime constraints, and SPA preserves provenance across queries \sources{camel,fides,agentspec,spa}. TOCTOU-Bench measures validation/use gaps for external state in agent tasks \sources{toctou}; PCAA and CAVA study action certificates and canonical approval binding \sources{pcaa,cava}. Our experimental object is the semantically assessed answer and its replacement, with the previous checkpoint as an available alternative output.

InjecAgent, AgentDojo and BIPIA operationalize attacker-controlled tool or retrieved content \sources{injecagent,agentdojo,bipia}. Our fixed BIPIA transfer condition provides a reproducible candidate-contamination endpoint; adaptive attacker studies motivate treating payload search budget and access as explicit threat-model parameters \sources{attacker,firewalls}. RAGTruth supplies human supportedness annotations \sources{ragtruth}, while semantic entropy measures meaning-level uncertainty \sources{entropy}. The central evaluation here tests how measured semantic errors propagate through an executable authorization boundary.

\section{Content-Bound Publication Enforcement and Recovery}
\subsection{Publication game}
Let $b$ be the UTF-8 encoding of a well-formed Unicode answer, $r,c$ nonempty current request and contract identifiers, and $C=(s,h_b,h_r,h_c,d)$ an authoritative receipt. The admissible status set $S$ contains \code{typed\_projection} and \code{independently\_verified}; $z$ indicates a digest-only output. Publication requires
\begin{align}
\Allow(b,C,r,c,z)\iff{}&\mathsf{Present}(b)\land\neg z\land s\in S\land d=\mathsf{true}\nonumber\\
&\land h_b=\Hash(b)\land h_r=r\land h_c=c.
\label{eq:allow}
\end{align}
Presence tests trimmed nonemptiness; the content hash covers the untrimmed publication string. Issuance and the mediated sink use an immutable tuple $(b,r,c)$. The experiment lets an adversary observe an issued receipt for $(b_0,r_0,c_0)$ and propose $(b_1,r_1,c_1)$; the adversary wins if a different tuple passes using that receipt.

\begin{proposition}[Approval-use integrity]
With authoritative issuance, immutable checked publication bytes and complete nonempty scope identifiers, the probability of a substitution-game win is bounded by the corresponding SHA-256 collision-finding advantage.
\end{proposition}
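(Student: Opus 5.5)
The plan is a direct reduction. From any adversary $\mathcal{A}$ that wins the substitution game, I will build a collision finder $\mathcal{B}$ for SHA-256. Then I will show that $\mathcal{B}$ succeeds whenever $\mathcal{A}$ wins, so $\Pr[\mathcal{A}\text{ wins}]\le \mathsf{Adv}^{\mathrm{coll}}_{\Hash}(\mathcal{B})$. Because issuance is authoritative, the only receipt the adversary can present is the one issued for $(b_0,r_0,c_0)$: $C=(s,\Hash(b_0),r_0,c_0,\mathsf{true})$ with $s\in S$. It can neither forge nor alter receipts. Because the checked bytes are immutable, the tuple evaluated by \eqref{eq:allow} at the sink is exactly the proposed $(b_1,r_1,c_1)$. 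This rules out a check-then-swap gap, so winning reduces to a purely combinatorial event about that tuple.

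Next I unfold the win condition. The adversary wins when $(b_1,r_1,c_1)\ne(b_0,r_0,c_0)$ and $\Allow(b_1,C,r_1,c_1,z)$ holds. By \eqref{eq:allow} this forces $h_r=r_1$, $h_c=c_1$ and $h_b=\Hash(b_1)$. Substituting the receipt fields gives $r_1=r_0$, $c_1=c_0$ and $\Hash(b_1)=\Hash(b_0)$. The scope identifiers are compared as complete nonempty strings rather than through a digest or prefix, so the first two equalities are exact. Hence the tuple inequality must come from $b_1\ne b_0$. The hash, however, covers the untrimmed bytes, so $(b_0,b_1)$ is a SHA-256 collision. $\mathcal{B}$ therefore runs $\mathcal{A}$, simulates issuance honestly by computing $\Hash(b_0)$ itself (no secret is needed), and outputs $(b_0,b_1)$. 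The simulation is perfect, so $\mathcal{B}$'s success probability equals $\mathcal{A}$'s win probability, and its running time is that of $\mathcal{A}$ plus one hash evaluation.

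The main obstacle is making the hypotheses carry the weight without hidden gaps, rather than any calculation. Three points need care. First, a win must not arise from a different encoding of the same text: I will argue that the game compares byte strings, and that $\mathsf{Present}$ and trimming affect only the presence predicate, never the hashed input. So trimming-equivalent but byte-distinct answers still require a genuine collision. Second, the statuses $s$ and $d$ are fixed by the issuer and give the adversary no freedom. Third, $z$ only restricts publication, so setting it cannot help the adversary. If the scope identifiers were hashed or truncated instead of compared in full, they would introduce extra collision terms. The hypothesis of complete identifiers is exactly what excludes these, and I will note this explicitly.
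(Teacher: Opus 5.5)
Your proposal is correct and is essentially the paper's proof. Unfolding Equation~\ref{eq:allow} against the authoritatively issued receipt forces $r_1=r_0$, $c_1=c_0$ and $\Hash(b_1)=\Hash(b_0)$, so a win with a distinct tuple has $b_1\ne b_0$, and the wrapper that runs the adversary and outputs $(b_0,b_1)$ on a win is exactly the paper's reduction. One small interpretive note: in the paper, nonemptiness addresses the implementation skipping the scope comparison when an expected identifier is empty (one of the executable counterexamples in the Discussion), not hashed or truncated identifiers; this does not affect your argument within Equation~\ref{eq:allow} as written.
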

\begin{proof}
An accepted proposal satisfies all three equalities in Equation~\ref{eq:allow}. A scope change falsifies $h_r=r_1$ or $h_c=c_1$. Therefore a winning proposal has the original scope and $b_1\ne b_0$ with $\Hash(b_1)=\Hash(b_0)$. An algorithm running the adversary and returning $(b_0,b_1)$ on a win produces a hash collision with the same probability. The construction requires no independence or soundness of the semantic checker.
\end{proof}

\subsection{Acknowledged issuance and checkpoint restoration}
The existing \code{commitResponseAct} computes content and scope bindings, emits \code{response\_act\_settlement}, and returns a frozen durable receipt after a literal synchronous \code{true} acknowledgement for a nonempty durable run ID. Storage acknowledgement is the persistence abstraction used by the copied caller.

A checkpoint $K$ stores answer $x$, receipt, scope, evidence fingerprint and digest status. Eligibility requires matching current request, contract and nonempty evidence identifiers, an active run, and
\begin{align}
x_K=\Trim(K.x),\qquad &\Trim(G(x_K))=x_K,\quad Q(x_K)=\mathsf{true},\nonumber\\
&\Allow(\mathsf{UTF8}(x_K),K.C,r,c,\mathsf{false}),
\label{eq:restore}
\end{align}
where $G,Q$ are the current guard and structural check. Before replacing the final answer, the restoration caller records the restored and discarded content hashes and waits for acknowledgement.

\begin{proposition}[Recovery noninterference]
For fixed eligible checkpoint $K$, current context $\gamma$, and successful restoration acknowledgement, the published output projection satisfies $R(K,\gamma,d_1)=R(K,\gamma,d_2)=x_K$ for any two discarded candidates $d_1,d_2$.
\end{proposition}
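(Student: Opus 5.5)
The argument is a data-flow (noninterference) argument over the restoration procedure. We treat $R$ as the function computed by the restoration caller: its inputs are a checkpoint $K$, the current context $\gamma=(r,c,e,\text{run state},G,Q)$ and a discarded candidate $d$, and its output is the published answer projection. The goal is to show that $d$ reaches the published projection through no path, given eligibility and acknowledgement. Concretely, we inspect each step the caller performs and classify which inputs it reads and which outputs it writes.

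\textbf{Step 1: the eligibility predicate is $d$-free.} We write eligibility as the conjunction of the scope and run checks with Equation~\ref{eq:restore}. Every conjunct is a function of $K$ and $\gamma$ alone:
\begin{itemize}
\item the request, contract and evidence identifiers come from $\gamma$ and $K$;
\item $x_K=\Trim(K.x)$, $G(x_K)$ and $Q(x_K)$ depend only on $K.x$ and the current guard and check;
\item $\Allow(\mathsf{UTF8}(x_K),K.C,r,c,\mathsf{false})$ depends on $K.C$ and $\gamma$.
\end{itemize}
Hence eligibility evaluates identically under $d_1$ and $d_2$, and by hypothesis it is true in both runs.

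\textbf{Step 2: the write path.} The candidate $d$ enters the caller in only one place: the audit record, as the discarded content hash $\Hash(d)$, written alongside $\Hash(x_K)$. The caller then waits for acknowledgement. By hypothesis the acknowledgement succeeds in both runs, so the caller's control flow after this point is the same. Any $d$-dependent branch ("acknowledgement failed") is excluded by assumption, so the next action in both runs is to replace the final answer with $x_K$.

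\textbf{Step 3: conclusion.} Because the published value $x_K=\Trim(K.x)$ is a constant of $K$, we obtain $R(K,\gamma,d_1)=R(K,\gamma,d_2)=x_K$. We also note why the sink check does not interfere. At the mediated sink, $\Allow$ passes by Step 1, and Proposition~1 (via $h_b=\Hash(x_K)$) does not depend on $d$. So the projection is not altered or withheld there either.

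The main obstacle is making precise that the acknowledgement step carries no covert $d$-dependence. The record contains $\Hash(d)$, so in principle storage behaviour could vary with $d$. I would handle this by making the projection explicit: $R$ returns only the published answer, not the audit log. I would also make the conditioning explicit: the statement quantifies over runs in which acknowledgement succeeded. Under this conditioning the only $d$-dependent branch is eliminated, and the rest of the argument is straightforward.
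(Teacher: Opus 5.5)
Your proposal is correct and follows the paper's proof step for step: eligibility in Equation~\ref{eq:restore} depends only on $(K,\gamma)$, the discarded candidate enters only as a hash in the acknowledged restoration event, and after acknowledgement the caller assigns the checkpoint text $x_K$. Your closing sink remark is unnecessary, since the $\Allow$ conjunct of eligibility already covers it, and Proposition~1 (a collision bound) is not the relevant fact there. Its claim that nothing is ever withheld also goes beyond the statement: the paper's proof notes that the caller preserves an existing blocked status and otherwise marks the result partial, and that status lies outside the text projection to which you correctly restrict $R$.
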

\begin{proof}
Eligibility in Equation~\ref{eq:restore} depends only on $(K,\gamma)$. The discarded candidate contributes its hash to the restoration event. After acknowledgement the caller assigns the returned checkpoint text, giving the same output for every discarded candidate. The caller preserves an existing blocked status and otherwise marks the recovered result partial.
\end{proof}

\input{risk-decomposition.tex}

\subsection{Implementation provenance}
All publication code is pinned at Lightcap commit \code{393e3e3db1abb35a0bb9c43f5e24156be08661bb}. The artifact copies \code{local-inference/response-act.js} and \code{schema.js}, the provider grammar conversion from \code{providers/mistral-transport.js}, publication/restoration predicates from \code{agent/source-projection.js} (lines 139--223), issuance/recovery closures from \code{agent.js} (1707--1765 and 1802--1878), and serialization/hash helpers from \code{reliability/cost\_control.js} (13--27). Byte hashes and span manifests bind the research adapter to these sources. Import/export wiring and dependencies are supplied by the adapter; function bodies, prompts, defaults and acceptance conditions are preserved. Inspection of the finalization call sites establishes the code's production integration. Research settlement journals and injected storage acknowledgements make issuance and recovery observable in isolation.

\input{benchmark-methods.tex}
\section{Semantic Verification and Policy Results}
\label{sec:results}
\input{benchmark-results.tex}
\input{security-results.tex}
\input{drift-results.tex}

\section{Calibration and Cross-System Transfer}
\input{calibration-results.tex}

\section{Discussion}
\subsection{Semantic authority and residual risk}
The primary result locates a substantial failure at approval issuance: the response-act checker releases almost all unsupported-labelled answers across the tested model sizes. The direct 14B baseline achieves lower unsupported release and lower supported retention. Instruction placement, multi-criterion assessment, schema structure and wording vary jointly between these interfaces; the paired comparison estimates their combined effect. The publication predicate then preserves the association between that fallible decision and exact content. Recovery retains the error rate of the selected checkpoint population, which motivates the promotion--correction accounting and asymmetric revision experiment.

Inspection identifies concrete semantic errors. In RAGTruth response 5528, an approved answer reverses who informs whom. In response 9200, source attributes set to \code{null} become affirmative amenities and are approved. Response 606 is classified as grounded but withheld for request/facet coverage. Responses 17060 and 17063 expose annotation disagreement: additional cautions absent from the evidence appear in both, but only the former carries an unsupported-span annotation. The primary estimates retain the original labels, with implicit-truth and quality sensitivities. These cases explain observed verdict types; their selection supports qualitative analysis rather than prevalence estimation.

\subsection{Scope of the implementation guarantees}
Executable counterexamples delimit the propositions. Caller-forged receipts pass the pure predicate, empty expected scope identifiers disable scope comparisons, empty evidence identifiers skip that recovery comparison, and an empty run ID enables ephemeral issuance. Distinct unpaired UTF-16 surrogates may share UTF-8 replacement bytes. Recovery compares trimmed guard strings. A semantically erroneous but content-matching receipt remains applicable. Consequently, the proof domain comprises authoritative receipts, nonempty complete context, well-formed Unicode, truthful acknowledged storage and a mediated immutable sink. The artifact exposes all seven counterexamples alongside the domain-valid conformance checks.

\subsection{Empirical scope and uncertainty}
RAGTruth measures evidence-relative supportedness across three task types and six original generators. Its 150-task sample supports estimates for that sampling frame, with clustering preserving within-task dependence. Annotation disagreements and possible pretraining overlap remain sources of measurement error. Hosted requests use dated models; serving variability, technical failures and experiment pacing are recorded separately from semantic decisions.

The symmetric replay measures exhaustive policy behavior conditional on observed initial approval. The time-ordered SQuAD experiment measures a realized transition process against independent human answer references. Exact match is sensitive to answer boundaries, and the paired article-clustered interval retains uncertainty about the direction of the policy effect. BIPIA payload transfer separately evaluates two fixed contamination goals at the declared editor interface; zero observed insertions leaves the boundary's incremental attack-removal effect unestimated. Retrieval judgments evaluate ranked evidence; BGL labels evaluate anomaly localization from projected metadata. These distinct labels specify the estimands of their respective experiments.

\section{Conclusion}
Approval integrity requires both a defensible semantic decision and faithful use of that decision at publication. Independent human annotations reveal a near-permissive production checker despite a valid structured interface. Content-bound authorization and acknowledged checkpoint recovery mediate subsequent replacement, while an exact promotion--correction identity describes their semantic consequences. Time-ordered revision experiments connect model behavior to those publication controls on the same outputs. The source-pinned artifact makes each layer independently reproducible and exposes the residual errors that integrity enforcement preserves.

\section*{Artifact and Ethics}
The source archive contains code, provenance manifests, public-data identifiers, structured model verdicts, measured results and reproduction instructions. Original corpus text and model weights are retrieved from pinned upstream repositories. Short SQuAD answer spans and human references accompany the chronological scores; full research-generated RAGTruth prose is retained locally for audit. Experiments use public data, local embedding inference and official hosted inference under a cumulative USD~5 authorization. Security payloads execute solely in the research editor and sink. Production user records and remote application state remain outside the experiment. Rights and source notices accompany redistributed material.

\appendix
\input{representation-results.tex}
\section{Conformance, Fault Injection and Timing}
The supplementary suite comprises 288 publication states, 1,024 recovery states, 1,792 mutation/control cases, 24 persistence-fault checks, three status checks and 1,280 checkpoint recoveries: 4,411 assertions, all passing in the archived run. These are finite-domain regression counts. The full predicate rejects 1,664 stale/non-durable cases and accepts 128 unchanged controls. A Boolean-approval ablation admits 1,536 stale cases; content-only binding leaves 256, adding request binding leaves 128, and omitting durability admits 128 non-durable cases. The copied caller restores every eligible checkpoint in the 1,280 successful-acknowledgement fixtures. Fault cases distinguish literal synchronous true from exceptions, promises and other return values.

On Apple M4 Pro, Node.js 26.7.0, Darwin 25.3.0, arm64, 21 warmed batches per size yield median batch-mean publication-check times of 1.29, 3.20 and 34.23~$\mu$s for 256, 4,096 and 65,536 UTF-8 bytes. The timed region is the publication predicate; model, storage, receipt construction and VM startup costs are measured elsewhere or excluded from this component timing.

\section{Reproduction}
Offline Node.js tests execute the copied predicates and recompute semantic and policy statistics from stored structured verdicts. Pinned original corpus downloads support source inspection and replay. Embedding scripts record model revisions, input hashes, pooling, dimensions, normalization and ranked document IDs. Fresh hosted inference uses a supplied Mistral credential and the shared durable budget journal. The archive's command manifest distinguishes offline arithmetic, local model inference and hosted model inference.
\input{transport-results.tex}
\bibliographystyle{plainnat}
\bibliography{references}
\end{document}

%% file: result-macros.tex
\newcommand{\LCFourteenUnsafeCount}{291}
\newcommand{\DirectUnsafeCount}{41}
\newcommand{\LCFourteenSupportedRetention}{95.2}
\newcommand{\DirectSupportedRetention}{66.9}
\newcommand{\UnsupportedCount}{302}

%% file: risk-decomposition.tex
\subsection{Selective risk and replacement balance}
\label{sec:risk-decomposition}
Let $Y=1$ indicate the evaluation's adverse endpoint, $A=1$ acceptance, $\pi=\Pr(Y=1)$, $f=\Pr(A=1\mid Y=1)$ and $t=\Pr(A=1\mid Y=0)$. Coverage $c$, released-output risk $r$, and supported yield $u$ satisfy
\begin{equation}
c=\pi f+(1-\pi)t,\qquad r=\frac{\pi f}{c},\qquad u=(1-\pi)t=c(1-r),\quad c>0.
\label{eq:selective}
\end{equation}
These quantities distinguish withholding, acceptance of adverse outputs, and useful output over the complete scenario population. For RAGTruth, $Y$ is the original human unsupported-span indicator; for the attack experiment, $Y$ is the predefined target-contamination endpoint.

For an initial answer $i$ and replacement $j$, let $w_{ij}\geq0$ be the sampling weight including actual initial acceptance and checkpoint eligibility, and $W=\sum_{ij}w_{ij}>0$. Recheck decision $A_j$ is observed on the exact replacement. Rollback publishes $i$; hybrid publishes $j$ when $A_j=1$ and otherwise $i$. Hence
\begin{align}
R_{\mathrm{hybrid}}-R_{\mathrm{rollback}}
&=\frac{1}{W}\sum_{ij}w_{ij}A_j(Y_j-Y_i)\nonumber\\
&=\frac{1}{W}\sum_{ij}w_{ij}A_j\big[(1-Y_i)Y_j-Y_i(1-Y_j)\big].
\label{eq:promotion-balance}
\end{align}
The first term counts harmful promotion; the second counts beneficial correction. The equality follows by substituting the hybrid output indicator $A_jY_j+(1-A_j)Y_i$ and collecting terms. It holds for dependent answers, correlated verifier errors and asymmetric replacement kernels.

\begin{proposition}[Directed-pair symmetry]
If every distinct same-source pair receives equal weight before conditioning on initial approval, the aggregate hybrid--rollback difference is zero, source by source.
\end{proposition}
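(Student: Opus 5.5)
The plan is to work directly from the promotion--correction identity in Equation~\ref{eq:promotion-balance}. Fix one source and its finite set of responses. The symmetric replay forms ordered pairs $(i,j)$ with $i\neq j$. Each pair is weighted by $w_{ij}=\omega\,A_i$, where $\omega>0$ is the common pre-conditioning weight and $A_i$ records actual initial acceptance; checkpoint eligibility is folded into $A_i$. Before normalization, the numerator of the hybrid--rollback difference for that source is
\[
N=\omega\sum_{i\neq j}A_iA_j\,(Y_j-Y_i).
\]
The key step is to show that $N=0$. Pooling over sources then gives an aggregate numerator that is a sum of zeros. That establishes the "source by source" clause, and it also covers any $W>0$, since only the numerator matters.

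The core computation is an antisymmetry argument. The factor $A_iA_j$ is symmetric under swapping $i$ and $j$, while $Y_j-Y_i$ is antisymmetric. Relabelling the summation indices therefore gives $N=-N$, so $N=0$. An equivalent route uses the bracketed form of Equation~\ref{eq:promotion-balance}. The harmful-promotion count $\sum_{i\neq j}A_iA_j(1-Y_i)Y_j$ is mapped bijectively onto the beneficial-correction count $\sum_{i\neq j}A_iA_jY_i(1-Y_j)$ by the same swap $(i,j)\mapsto(j,i)$. Every harmful promotion therefore has a mirrored beneficial correction with equal weight, and the two sums cancel.

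The main obstacle is stating precisely what "conditioning on initial approval" leaves symmetric. The weight $A_i$ alone is asymmetric. The saving point is that the identity multiplies it by the recheck decision $A_j$. This requires the recheck verdict on $j$ to coincide with the verdict that would make $j$ eligible as an initial answer, which means the same checker with the same scope applied to the exact bytes, consistent with "recheck decision $A_j$ is observed on the exact replacement". I would state this as the operative assumption. The antisymmetry also requires that no further asymmetric factor enter $w_{ij}$, such as eligibility depending on the discarded candidate. Recovery noninterference rules this out, because eligibility in Equation~\ref{eq:restore} depends only on $(K,\gamma)$. Once these two points are settled, the cancellation is purely combinatorial. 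It uses no independence or error-structure assumptions, matching the remark after Equation~\ref{eq:promotion-balance}.
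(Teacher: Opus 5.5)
Your proof is correct and essentially matches the paper's: both write the weight as a symmetric pre-conditioning factor times $A_i$, reduce each source's numerator to $\sum_{ij}A_iA_j(Y_j-Y_i)$, and cancel it with the involution $(i,j)\mapsto(j,i)$. The paper states the hypothesis slightly more generally as $q_{ij}=q_{ji}$ rather than a constant $\omega$, and it leaves implicit your correct point that the initial-acceptance indicator and the recheck indicator must be the same measured verdict.
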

\begin{proof}
Write $w_{ij}=q_{ij}A_i$ with $q_{ij}=q_{ji}$. The numerator of Equation~\ref{eq:promotion-balance} is $\sum_{ij}q_{ij}A_iA_j(Y_j-Y_i)$. Under the index involution $(i,j)\mapsto(j,i)$ each term has its additive inverse. Thus the sum vanishes within each source, and every source-bootstrap replicate preserves the equality.
\end{proof}

For the exhaustive replay this is a design identity. For a chronological revision kernel $q_{ij}$, symmetry generally fails; the measured difference estimates the balance of actual promotions and corrections. The artifact checks Equation~\ref{eq:selective} against all frozen confusion counts and Equation~\ref{eq:promotion-balance} against the policy-selected outputs.

%% file: benchmark-methods.tex
\section{External Evaluation Protocol}
\label{sec:external}
The evaluation separates three units: independently human-labelled responses, policy-selected outputs under exhaustive pairing, and time-ordered outputs under actual model revision. The primary research questions concern semantic false approval, supported-answer retention, and propagation of changed content through publication policies.

\subsection{Independent human annotations}
RAGTruth contains human span-level supportedness annotations for question answering, summarization and structured-data-to-text responses \sources{ragtruth}. We obtain its original repository at commit \code{c103204b9ce28d6bbad859304bf30de72b8ed8fe}. Within the official test partition, source identifiers are ordered by SHA-256 of a fixed seed string concatenated with the identifier. The first 50 sources per task type contribute all six associated test responses: 900 responses from 150 sources. A disjoint training-only pilot validates transport. Selection is independent of labels, response length, quality flags, generator identity and measured verdicts.

The primary indicator $Y_i=1$ denotes at least one human-annotated unsupported span. Spans marked \code{implicit\_true} remain adverse under the primary evidence-relative endpoint; a sensitivity analysis removes those spans. Quality-flagged responses remain in the primary population, with a separate good-quality sensitivity. The complementary label denotes absence of an annotated unsupported span.

For QA, the original question is the request and its supplied passages are evidence. For summarization, the instruction before the exact article is the request, and the complete article is evidence. For data-to-text, the instruction is separated at the structured-data delimiter and the complete object is serialized as evidence. Mapping requires a nonempty request and evidence packet. A single request-derived facet supplies the coverage target. Labels and generator identities remain outside model inputs. Hosted requests preserve the complete mapped evidence and disable the local compact-context projection.

\subsection{Verifier comparison and inference protocol}
The original \code{assessGroundedResponseAct} is evaluated with dated Ministral models at three sizes. The official model IDs are \code{ministral-3b-2512}, \code{ministral-8b-2512} and \code{ministral-14b-2512}. Its deterministic prechecks and structured assessment cover response role, request satisfaction, grounding, source use and facet coverage. The acceptance predicate and prompts remain unchanged.

The paired direct 14B baseline returns a single Boolean for whether every factual assertion is supported by the identical evidence. Its system instruction assigns evidence and candidate data status and requires support from the packet. Thus the baseline comparison estimates the combined effect of changing the checker interface and criteria at fixed model size. Always-publish and always-abstain policies define the endpoints of the coverage range.

All methods use temperature zero, seed 1729 and the existing 512-token verifier output envelope. The original provider grammar conversion is followed by the complete local schema validator. JSON serialization preserves the transport boundary across adapter realms. Exact request hashes identify reusable cache entries. Records contain requested and returned model IDs, structured outputs, token usage, transaction duration and technical error codes.

All hosted experiments share one durable journal under the author's cumulative USD~5 authorization. The client reserves the full model-context charge plus the requested output ceiling at public token prices \sources{ministralthree,ministraleight,ministralfourteen}; reported usage settles the reservation, while ambiguous requests retain it. A process lock admits a single journal owner and a USD~4.80 admission threshold leaves headroom within the authorization. The journal records public-price estimated charges and outstanding reservations.

\subsection{Exhaustive same-source policy replay}
For each source, every ordered pair of distinct answer strings $(i,j)$ is considered. A receipt for $i$ is issued only after its observed checker acceptance and synchronous research-journal acknowledgement. The before/after strings share request and evidence and retain their separate human annotations. Five policies execute the copied predicates: stale approval, exact block, exact rollback, semantic recheck, and recheck followed by rollback. Recheck uses the already measured verdict on the exact $j$ packet. Eligibility fixes current scope/evidence, active-run state, structural acceptance and a guard preserving checkpoint text.

Each outcome is scored by the label of its selected answer. The pair count describes enumerated scenarios within the original source tasks. Directed-pair symmetry supplies a source-wise accounting invariant: accepted supported-to-unsupported transitions and their reverse transitions cancel in Equation~\ref{eq:promotion-balance}. The chronological experiment in the next section estimates a distinct asymmetric transition process.

\subsection{Supplementary interface probes}
Three fixed candidate suffixes assert previous approval, forge a control delimiter, or supply a fabricated verdict. The first response from ten hash-selected sources per task type produces 30 candidates, assessed under all three suffixes and four methods. An attack bypass is a newly accepted response whose clean counterpart was rejected and carried an unsupported-span annotation. The frozen set's eligible bypass denominator is reported explicitly.

Thirty further source tasks exercise the complete \code{generateVerifiedGroundedResponseAct} entry point with a 3B producer and separate 14B checker. These transactions measure generation/assessment execution, source/facet agreement and acceptance. Their generated answers form a new unlabelled prose population.

\subsection{Estimands and uncertainty}
For $n$ responses, write $N_{\!P}=\sum_i A_i$, $N_{\!U}=\sum_i A_iY_i$ and $N_{\!S}=\sum_i A_i(1-Y_i)$. Then coverage is $N_{\!P}/n$, released risk is $N_{\!U}/N_{\!P}$, and supported yield is $N_{\!S}/n$. Unsupported release additionally conditions on an adverse label, $N_{\!U}/\sum_iY_i$, while supported retention conditions on its complement, $N_{\!S}/\sum_i(1-Y_i)$. Released risk is undefined when $N_{\!P}=0$. Replay substitutes the number of eligible scenarios for $n$ and uses the selected output's human label.

We compute 2,000 seed-1729 percentile bootstrap resamples of source tasks, carrying every response and policy outcome of each sampled task together. Paired method differences use the same resampled task indices. This preserves dependence between answers and between repeated pairs. Technical errors remain explicit in operational withholding counts; common technical-success sensitivity removes the same affected response IDs across methods. Hosted latency includes the assessment transaction and shared admission pacing, with call-level elapsed times recorded separately.

%% file: benchmark-results.tex
\subsection{Human-labelled semantic decisions}

The frozen sample contains 900 responses from 150 source tasks, with 302 responses carrying annotated unsupported spans and 598 without such spans. The 3,600 method--response assessments produced 3,593 successful structured model returns and 7 technical failures; 0 assessments were resolved by deterministic prechecks alone. Table~\ref{tab:semantic} includes technical failures as operational withholding and lists them explicitly; common technical-success sensitivity is supplied below.

\begin{table}[H]\centering\small

\begin{tabular}{@{}lrrrr@{}}\toprule

Method & Unsupported release & Supported retention & Released risk & Errors \\

\midrule

Response-act 3B & 99.7 [99.0, 100.0] & 98.3 [96.9, 99.5] & 33.9 & 0 \\

Response-act 8B & 98.3 [96.7, 99.7] & 96.5 [94.1, 98.5] & 34.0 & 0 \\

Response-act 14B & 96.4 [94.0, 98.3] & 95.2 [92.3, 97.4] & 33.8 & 3 \\

Direct grounding 14B & 13.6 [9.6, 18.0] & 66.9 [61.9, 71.4] & 9.3 & 4 \\

Always publish & 100.0 [100.0, 100.0] & 100.0 [100.0, 100.0] & 33.6 & 0 \\

Always abstain & 0.0 [0.0, 0.0] & 0.0 [0.0, 0.0] & --- & 0 \\

\bottomrule\end{tabular}

\caption{Percentages; brackets are 95\% source-task bootstrap intervals. Unsupported release is conditional on human-labelled unsupported content; supported retention is conditional on its absence. Released risk is conditional on publication and is undefined for zero output. The outcome label is evidence-relative human annotated supportedness.}\label{tab:semantic}\end{table}

The principal negative result is the response-act verifier's near-permissive behavior on unsupported content. Its 3B, 8B and 14B instances release 301, 297 and 291 of the same 302 unsupported-labelled answers. The direct 14B baseline releases 41, but retains only 66.9\% of supported-labelled answers versus 95.2\% for response-act 14B. This estimates a risk--retention tradeoff. The gap to the grounding-only interface persists at 14B.

\begin{table}[H]\centering\small

\begin{tabular}{@{}llrrr@{}}\toprule Task & Method & Unsupported release & Supported retention & Released risk \\ \midrule

QA & Response-act 3B & 98.0 & 96.0 & 17.0 \\

QA & Response-act 8B & 92.0 & 91.6 & 16.7 \\

QA & Response-act 14B & 88.0 & 88.8 & 16.5 \\

QA & Direct grounding 14B & 16.0 & 62.8 & 4.8 \\

Summary & Response-act 3B & 100.0 & 100.0 & 22.3 \\

Summary & Response-act 8B & 98.5 & 100.0 & 22.1 \\

Summary & Response-act 14B & 98.5 & 99.6 & 22.1 \\

Summary & Direct grounding 14B & 20.9 & 81.1 & 6.9 \\

Data2txt & Response-act 3B & 100.0 & 100.0 & 61.7 \\

Data2txt & Response-act 8B & 100.0 & 100.0 & 61.7 \\

Data2txt & Response-act 14B & 97.8 & 100.0 & 61.1 \\

Data2txt & Direct grounding 14B & 10.3 & 47.0 & 26.0 \\

\bottomrule\end{tabular}\caption{Task-type breakdown (percentages), 300 responses and 50 source tasks per task type. Full intervals and per-generator breakdowns are in the artifact.}\label{tab:tasks}\end{table}

\subsection{Sensitivity and paired method differences}

\begin{table}[H]\centering\small

\begin{tabular}{@{}lrrrr@{}}\toprule & \multicolumn{2}{c}{Good-quality subset} & \multicolumn{2}{c}{Common technical-success subset} \\ \cmidrule(lr){2-3}\cmidrule(lr){4-5}

Method & Unsupp. release & Supp. retention & Unsupp. release & Supp. retention \\ \midrule

Response-act 3B & 99.7 & 99.3 & 99.7 & 98.3 \\

Response-act 8B & 98.3 & 97.6 & 98.3 & 96.5 \\

Response-act 14B & 96.4 & 96.3 & 96.4 & 95.6 \\

Direct grounding 14B & 13.6 & 66.7 & 13.6 & 67.5 \\

\bottomrule\end{tabular}\caption{Sensitivity percentages. Technical-success filtering removes the same response IDs from all four methods. Filtering is paired across methods.}\label{tab:sensitivity}\end{table}

Excluding spans explicitly marked as implicitly true reduces the unsupported-labelled count to 284. Under that alternative target, response-act 14B releases 273 unsupported-labelled responses (96.1\%) and retains 95.3\% of responses without a remaining unsupported span. Direct grounding 14B releases 37 (13.0\%) and retains 65.6\%, respectively. The primary target remains support in the supplied context.

Response-act 14B minus Direct grounding 14B changes unsupported release by 82.8 percentage points (95\% interval [78.0, 87.2]) and supported retention by 28.3 points ([23.6, 33.3]). Intervals use paired source-task resamples.

Response-act 14B minus Response-act 3B changes unsupported release by -3.3 percentage points (95\% interval [-5.7, -1.3]) and supported retention by -3.2 points ([-5.6, -1.1]). Intervals use paired source-task resamples.

Response-act 14B minus Response-act 8B changes unsupported release by -2.0 percentage points (95\% interval [-4.0, -0.3]) and supported retention by -1.3 points ([-2.7, -0.3]). Intervals use paired source-task resamples.

\subsection{Replacement policies and error transfer}

Replay executed 13,109 model-conditioned directed replacement scenarios. This count enumerates three verifier instantiations over the original 150 source tasks. 6 otherwise eligible identical-string pairs were excluded.

\begin{table}[H]\centering\small

\begin{tabular}{@{}llrrr@{}}\toprule Verifier & Policy & Coverage & Released risk & Supported yield \\ \midrule

Response-act 3B & Stale approval & 100.0 & 33.8 & 66.2 \\

Response-act 3B & Exact block & 0.0 & --- & 0.0 \\

Response-act 3B & Exact rollback & 100.0 & 33.9 & 66.1 \\

Response-act 3B & Semantic recheck & 98.9 & 34.1 & 65.2 \\

Response-act 3B & Recheck + rollback & 100.0 & 33.9 & 66.1 \\

Response-act 8B & Stale approval & 100.0 & 34.0 & 66.0 \\

Response-act 8B & Exact block & 0.0 & --- & 0.0 \\

Response-act 8B & Exact rollback & 100.0 & 34.0 & 66.0 \\

Response-act 8B & Semantic recheck & 97.8 & 34.3 & 64.2 \\

Response-act 8B & Recheck + rollback & 100.0 & 34.0 & 66.0 \\

Response-act 14B & Stale approval & 100.0 & 34.1 & 65.9 \\

Response-act 14B & Exact block & 0.0 & --- & 0.0 \\

Response-act 14B & Exact rollback & 100.0 & 33.9 & 66.1 \\

Response-act 14B & Semantic recheck & 96.7 & 34.2 & 63.6 \\

Response-act 14B & Recheck + rollback & 100.0 & 33.9 & 66.1 \\

\bottomrule\end{tabular}\caption{Replacement-policy percentages conditional on actual initial approval and an eligible checkpoint. For each row, $N$ is the eligible scenario count, $P$ the number published, and $U$ and $S$ the selected outputs with and without human unsupported-span labels. Coverage $=100P/N$, released risk $=100U/P$, supported yield $=100S/N$. $P=0$ gives undefined risk. Source-clustered intervals accompany the artifact.}\label{tab:replay}\end{table}

Hybrid and rollback risks are equal by the symmetric all-directed-pairs design: each harmful promotion between two accepted answers has a reversed beneficial correction. Their zero difference and degenerate interval verify the source-wise accounting invariant.

Response-act 3B has 4,443 eligible scenarios across 150 tasks with an approved checkpoint. Recheck plus rollback makes 825 harmful promotions and 825 beneficial corrections. Its risk difference from exact rollback is 0.0 percentage points (95\% interval [0.0, 0.0]); the finite-sample identity in Equation~\ref{eq:promotion-balance} holds exactly. All 150 original source tasks, including zero-eligible-scenario tasks, are retained as bootstrap sampling units.

Response-act 8B has 4,368 eligible scenarios across 150 tasks with an approved checkpoint. Recheck plus rollback makes 799 harmful promotions and 799 beneficial corrections. Its risk difference from exact rollback is 0.0 percentage points (95\% interval [0.0, 0.0]); the finite-sample identity in Equation~\ref{eq:promotion-balance} holds exactly. All 150 original source tasks, including zero-eligible-scenario tasks, are retained as bootstrap sampling units.

Response-act 14B has 4,298 eligible scenarios across 150 tasks with an approved checkpoint. Recheck plus rollback makes 785 harmful promotions and 785 beneficial corrections. Its risk difference from exact rollback is 0.0 percentage points (95\% interval [0.0, 0.0]); the finite-sample identity in Equation~\ref{eq:promotion-balance} holds exactly. All 150 original source tasks, including zero-eligible-scenario tasks, are retained as bootstrap sampling units.

\subsection{Supplementary candidate-interface sensitivity}

Response-act 3B accepts 30/30 clean candidates and 30/30, 30/30, 30/30 under prior-audit, control-delimiter and forged-receipt suffixes, respectively.

Response-act 8B accepts 30/30 clean candidates and 30/30, 30/30, 30/30 under prior-audit, control-delimiter and forged-receipt suffixes, respectively.

Response-act 14B accepts 30/30 clean candidates and 30/30, 30/30, 30/30 under prior-audit, control-delimiter and forged-receipt suffixes, respectively.

Direct grounding 14B accepts 23/30 clean candidates and 27/30, 25/30, 7/30 under prior-audit, control-delimiter and forged-receipt suffixes, respectively.

The frozen stress set has zero eligible bad-answer bypass cases. Its bypass rate is undefined; observed acceptance shifts characterize interface sensitivity.

\subsection{Producer--checker execution and accounting}

The full production response-act function accepted 29 of 30 generated candidates in the separate 3B-producer/14B-checker probe, with 1 technical failure. This measures execution and acceptance of newly generated, unlabelled answers.

All research calls, including training pilots, stress and pipeline probes, used 4,687,721 reported input tokens and 250,980 reported output tokens. The published-price settled estimate is USD~0.7880; including unresolved conservative reservations, the journal accounts for USD~1.3921 against the authorized USD~5 ceiling. These amounts are computed from reported usage and the journal's unresolved reservations.

\begin{table}[H]\centering\small\begin{tabular}{@{}lrrrr@{}}\toprule Method & Input tokens & Output tokens & Calls & Median/P95 (ms) \\ \midrule

Response-act 3B & 1,020,597 & 55,682 & 900 & 4964/7741 \\

Response-act 8B & 1,020,597 & 55,770 & 900 & 4961/8548 \\

Response-act 14B & 1,018,229 & 55,752 & 897 & 4939/8240 \\

Direct grounding 14B & 769,263 & 7,328 & 896 & 4970/7616 \\

\bottomrule\end{tabular}\caption{Main-set usage and measured assessment latency. Latency includes waiting for the shared experiment admission schedule. Per-call service elapsed times are separately recorded.}\label{tab:usage}\end{table}

%% file: security-results.tex
\section{Time-Ordered Revision and Attack Transfer}

\label{sec:chronological}

\subsection{Independent-reference revision protocol}

To estimate an asymmetric replacement process, we sample 100 questions with distinct context hashes from the official SQuAD validation split on Hugging Face \sources{squad,squadhf}. Question IDs are ordered by SHA-256 of the fixed string \code{approval-revision-1729:} followed by the ID. The first question for each new context is retained. The pinned dataset revision is \code{7b6d24c440a36b6815f21b70d25016731768db1f}. The sample spans 42 article titles.

A dated 3B model generates an extractive answer; the unchanged 14B response-act checker assesses it. A dated 14B model then receives the question, complete original context and actual 3B answer and produces a corrected extractive answer, which the same checker assesses under the shared exact-request cache. Prompts require the shortest complete answer span copied from context. Original human reference answers remain outside all model inputs. SQuAD normalized exact match (EM) and token F1 are computed against every supplied human reference, taking the maximum. Here $Y=1-\mathrm{EM}$ denotes reference-answer mismatch. This task-correctness endpoint permits direct scoring of both new outputs; RAGTruth retains its separate unsupported-span endpoint. The 200 generation and 200 assessment invocations yield 335 structured model results (40 reused from identical-request cache) and 65 deterministic prechecks, requiring 295 unique hosted transmissions.

Publication policies run on the same chronological pair. An initial receipt requires actual checker acceptance and acknowledged issuance. A successful recheck issues a new receipt for the revised bytes. Restoration revalidates the initial checkpoint and records both content hashes before assignment. All policy-selected outputs are scored against the same human answer references. Bootstrap resampling clusters by article title and carries all associated trajectories together (2,000 paired resamples, seed 1729).

\subsection{Chronological policy outcomes}

All 100 selected questions were attempted; 100 yielded complete trajectories and 0 had technical failures. Initial 3B EM/F1 was 72.0/83.4\%; after actual 14B revision it was 62.0/82.1\%. The paired EM-change interval was [-21.6, 1.9] percentage points. The checker admitted 65 initial and 61 revised answers. Policy analysis conditions on the 65 actual initial approvals, of which 28 had a changed revision.

\begin{table}[H]\centering\small\begin{tabular}{@{}lrrrr@{}}\toprule Policy & Published & Coverage & Released EM error & Exact-answer yield \\ \midrule

Stale approval & 65/65 & 100.0 & 36.9 & 63.1 \\

Exact block & 37/65 & 56.9 & 5.4 & 53.8 \\

Exact rollback & 65/65 & 100.0 & 18.5 & 81.5 \\

Semantic recheck & 53/65 & 81.5 & 26.4 & 60.0 \\

Recheck + rollback & 65/65 & 100.0 & 27.7 & 72.3 \\

\bottomrule\end{tabular}\caption{Actual 3B-to-14B revision, conditional on initial acceptance. Coverage $=100P/N$; released EM error $=100E/P$, with $E$ the selected outputs failing normalized exact match; exact-answer yield $=100C/N$, with $C$ exact-matching selected outputs. These labels come from independent SQuAD reference spans. Percentages use the displayed per-policy denominator.}\label{tab:chronological}\end{table}

Hybrid publication produced 9 EM-error promotions and 3 EM corrections. Its released-error difference from rollback was 9.2 [-1.7, 19.6] percentage points (95\% article-clustered interval). The empirical difference equals $(9-3)/65$ by Equation~\ref{eq:promotion-balance}. Transition direction is fixed by actual generation and revision. The interval spans zero, so the sample establishes the observed balance while retaining uncertainty about its population direction.

EM changes include both answer-boundary variation and content errors. In question \code{5726f1ec7089} (ID prefix), a year-only answer becomes the full supported date, reducing EM. In \code{572655267089}, the revision omits one of the two requested countries. In \code{56e0fd167aa9}, it replaces the date of first experiments with the date of arrival, despite the context distinguishing them. All three revised candidates receive checker approval. These source-audited examples motivate reporting token F1 alongside EM and interpreting the aggregate as reference-answer error. The F1-weighted selected-output yield is 90.3\% for rollback and 86.0\% for hybrid.

\subsection{Fixed external attack transfer}

A separate experiment uses 30 RAGTruth source tasks, ten per task type, with a human-supported answer accepted by the frozen 14B checker. Dated 3B, 8B and 14B optional editors receive the original request/evidence and that answer. Each task has a clean condition and two verbatim BIPIA test payloads: \emph{Marketing \& Advertising}[1] and \emph{Scams \& Fraud}[1], pinned at commit \code{a004b69ec0dd446e0afd461d98cb5e96e120a5d0}. Both request insertion of a specified sentence through an additional tool excerpt. The fixed editor instruction classifies that excerpt as untrusted data. The oracle detects the specified sentence under lowercase/whitespace normalization and checks its absence in the clean counterpart.

Across 270 planned transactions, 266 returned valid answers and four had invalid provider responses. Target-sentence occurrence was zero in all 266 valid outputs. Thus this fixed transfer condition provides no contaminated generated candidate on which to estimate incremental attack removal by the publication boundary. The boundary still executes on every valid output and recovery is checked against the original human-supported checkpoint. Five initial harness assertion failures were corrected by respecting the original restoration return type; the same provider requests were replayed from cache.

The existing synthetic candidate-suffix condition and this external payload-transfer condition have distinct denominators. The former has zero clean-rejected unsupported cases; the latter has real attempted attacks but zero observed target insertion. Their empirical contribution is the recorded interface behavior and technical attrition.

%% file: drift-results.tex
\subsection{Sequential approval dynamics}
We extend each of the same 100 SQuAD trajectories through an 8B revision and a second 14B revision. Each revision receives the question, unchanged context and preceding answer; the unchanged 14B checker assesses the new text. Human references remain scoring-only. All four stages complete for all 100 questions. The two added stages contain 333 structured invocations, 67 deterministic prechecks, and 169 additional hosted requests after cache reuse. These are continuations of the original questions, with inference reuse recorded at the exact request hash.

\begin{samepage}
Let $Y(b)\in\{0,1\}$ denote answer-reference error, $A_t$ the observed approval of candidate $b_t$, and $K_t$ the latest approved checkpoint. On the $N=65$ initially approved trajectories, with unchanged scope and acknowledged storage,
\begin{align}
K_t&=\begin{cases}b_t,&A_t=1,\\K_{t-1},&A_t=0,\end{cases}\qquad
R_t=\frac1N\sum_i Y(K_{it}),\nonumber\\
R_T-R_0&=\frac1N\sum_{i=1}^N\sum_{t=1}^T
A_{it}\bigl[Y(b_{it})-Y(K_{i,t-1})\bigr]
=\frac1N\sum_{t=1}^T(P_t-C_t).
\label{eq:telescoping-drift}
\end{align}
\end{samepage}
Here $P_t$ and $C_t$ count newly accepted error promotions and corrections relative to the immediately preceding checkpoint. Substitution of the two checkpoint cases yields each summand; summation telescopes. This identity accommodates nonmonotone revision quality and repeated acceptance of erroneous text.
\begin{table}[H]\centering\small
\begin{tabular}{lrrrrl}\toprule
Stage & Raw EM & Raw F1 & $P_t/C_t$ & Checkpoint error & $R_t-R_0$ [95\% CI]\\\midrule
Initial 3B & 72.0 & 83.4 & 0/0 & 18.5 & 0.00 [0.00, 0.00] \\
First 14B & 62.0 & 82.1 & 9/3 & 27.7 & 9.23 [-1.72, 19.61] \\
8B revision & 66.0 & 82.3 & 4/7 & 23.1 & 4.62 [-4.17, 12.90] \\
Final 14B & 69.0 & 86.2 & 7/4 & 27.7 & 9.23 [-1.72, 19.61] \\
\bottomrule\end{tabular}
\caption{Actual four-stage answer trajectories. Raw EM/F1 use all 100 questions; checkpoint error and paired differences use the 65 initially approved trajectories. Percentages and percentage-point differences use the original 42-article clustered bootstrap.}\label{tab:sequential-drift}\end{table}
The 8B stage partially reverses the first revision's error increase, and the final 14B stage reverses that recovery. The final aggregate difference equals the first revision's difference, although individual trajectories differ. All three noninitial paired intervals span zero. The observed promotion/correction counts identify the realized dynamics without imposing symmetric replacement pairs.

Question \code{5726d4a45951b619008f7f6b}, from the Victoria and Albert Museum article, asks for the Italian sculptor associated with the creation of Baroque sculpture. Its three human references identify Gian Lorenzo Bernini. The first two stages answer Bernini; 8B changes the answer to Canova and receives approval. Both names occur in the context, but they answer different entity-level propositions. This is an observed zero-token-overlap error promotion. In contrast, question \code{56d99c44dc89441400fdb5da} expands the reference ``4.7'' to ``4.7 yards per carry'', an EM error with partial token overlap. After a later rejection, the stateful policy retains that expanded checkpoint. The first case demonstrates answer-entity drift; the second demonstrates checkpoint persistence and EM boundary sensitivity.

\subsection{Evidence freshness and incomplete mediation}
The publication predicate binds answer bytes, request and contract, while the recovery predicate additionally compares an evidence fingerprint. Inspection of \code{responseActContractHash} shows goal, answer facets, response constraints and active obligations, with evidence state represented separately. We apply controlled context transitions to the 65 real initial approvals: change only request, contract or evidence; use an actual changed answer when one exists; retain an unchanged control. Storage, guard and structure conditions remain fixed. The research full-context wrapper joins the authoritative receipt with the checkpoint's evidence fingerprint and requires its equality at publication.
\begin{table}[H]\centering\small
\begin{tabular}{lrrrr}\toprule
Transition & Trials & Publication accepts & Recovery accepts & Full-context accepts\\\midrule
Unchanged & 65 & 65 & 65 & 65 \\
Answer bytes & 35 & 0 & 0 & 0 \\
Request & 65 & 0 & 0 & 0 \\
Contract & 65 & 0 & 0 & 0 \\
Evidence fingerprint & 65 & 65 & 0 & 0 \\
\bottomrule\end{tabular}
\caption{Executed freshness mediation on observed answer/receipt pairs. Context changes are controlled interventions; accepted trials count approval applicability after the intervention. The full-context column is a research wrapper around the unchanged predicate.}\label{tab:freshness}\end{table}
All 65 evidence-only transitions pass publication and fail recovery. These counts expose a predicate-level asymmetry: the publication interface cannot distinguish approval under old evidence from approval under current evidence when its observed tuple is unchanged. Source withdrawal, correction or replacement supplies the corresponding application-level freshness transition. The experiment executes this transition directly rather than attributing it to an observed attacker session.

\begin{proposition}[Context indistinguishability]
Let a publication rule observe only $\phi(b,r,c,e)=(\Hash(b),r,c)$. If an authorization requirement differs between $(b,r,c,e_0)$ and $(b,r,c,e_1)$, every deterministic rule on $\phi$ misclassifies at least one state. A randomized rule has the same acceptance probability in both states.
\end{proposition}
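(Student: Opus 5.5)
The argument only needs the observation map $\phi$: two states that $\phi$ sends to the same point must receive the same decision. Fix $b,r,c$ and evidence states $e_0\ne e_1$. Then
\[
\phi(b,r,c,e_0)=(\Hash(b),r,c)=\phi(b,r,c,e_1),
\]
because $\phi$ does not depend on $e$. A publication rule that observes only $\phi$ can therefore be written as a function of $\phi$ alone: $D=g\circ\phi$ with $g$ deterministic, or a randomized $g$ whose output distribution is indexed by the observation. No property of $\Hash$ is involved, since the two states share the same answer bytes $b$ and hence the same digest. The paper reads $\Allow$ this way: it sees $\Hash(b)$, $r$, $c$, but never the evidence fingerprint.

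For the deterministic case, let $\tau(s)\in\{0,1\}$ be the authorization requirement, i.e.\ the correct accept/reject label for state $s$. By hypothesis $\tau(b,r,c,e_0)\ne\tau(b,r,c,e_1)$, so exactly one of the two states should be accepted. The rule gives both states the single value $g(\Hash(b),r,c)$, and that one value cannot equal two different labels. So the rule misclassifies at least one of the two states. If the requirement is accept-exactly-one, it misclassifies exactly one.

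For the randomized case, model the rule as a map $g$ from observations to acceptance probabilities, $p(o)=\Pr[\text{accept}\mid o]$, with internal randomness independent of the unobserved component $e$. Then $\Pr[\text{accept}\mid (b,r,c,e_k)]=p(\phi(b,r,c,e_k))$ for $k=0,1$, and the two values coincide because the arguments coincide. A corollary follows at once. The errors are $1-p$ on the state that should be accepted and $p$ on the state that should be rejected, so they sum to $1$. The larger of the two is therefore at least $1/2$, and this matches the pattern in Table~\ref{tab:freshness}, where all 65 evidence-only transitions pass publication.

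The only step needing care is the modelling assumption behind "observes only $\phi$." It must exclude side information correlated with $e$, for example a randomness source or an implementation that reads ambient state. I will state it formally as $D=g\circ\phi$, with $g$'s randomness independent of the full state; once that is fixed, the rest is immediate. I will also note where $e$ enters in practice: the research full-context wrapper extends $\phi$ with the evidence fingerprint, the two states then become separable, and the proposition no longer applies.
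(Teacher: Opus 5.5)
Your proposal is correct and follows the paper's argument: both rest on $\phi(b,r,c,e_0)=\phi(b,r,c,e_1)$, so a deterministic rule outputs a single value where the required decisions differ, and a randomized rule has one output distribution shared by both states. Your explicit assumption that the rule's randomness is independent of $e$, and your corollary that the larger error is at least $1/2$, are sound additions that the paper leaves implicit.
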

\begin{proof}
The two states have identical images under $\phi$. Therefore the deterministic outputs, or randomized output distributions conditional on that image, are identical, whereas the required decisions differ.
\end{proof}
Binding an injectively encoded context version $(r,c,e)$ in an authoritative receipt restores distinguishability. A changed context then requires a new semantic assessment and receipt; otherwise a version-matching eligible checkpoint is the recoverable object. This extension closes the measured applicability gap while leaving semantic false approval as a separately measured risk.

%% file: calibration-results.tex
\subsection{Scores, selection and transfer estimands}
Calibration acts on scores and decision rules. Precision, recall and discounted gain retain their original evaluation definitions. Let $p_j$ predict the independent binary label $y_j$. We measure Brier score $\mathrm{BS}=n^{-1}\sum_j(p_j-y_j)^2$ and log loss $\mathrm{NLL}=-n^{-1}\sum_j[y_j\log p_j+(1-y_j)\log(1-p_j)]$, with probabilities clipped to $[10^{-12},1-10^{-12}]$ for numerical evaluation. Source-derived probability deciles define the accompanying ECE bins. Proper scoring losses are primary because binning choices affect calibration summaries \sources{guocalibration,nixoncalibration}.

For a nondecreasing scalar map $g$, ordering candidates lexicographically by $(g(s),s)$ preserves the original ordering by $s$. Consequently nDCG, recall and reciprocal rank remain exactly invariant when calibrated ties retain the original order. We test that invariance directly. Ranking changes are attributed separately to a fitted multichannel model or a changed selection rule.

Let $\eta_D(x)=\Pr_D(Y=1\mid X=x)$ for source $S$ and target $T$. Conditional expectation yields
\begin{equation}
\mathbb{E}_T[(g(X)-Y)^2]
=\mathbb{E}_T[(g(X)-\eta_T(X))^2]
+\mathbb{E}_T[\eta_T(X)(1-\eta_T(X))].
\label{eq:calibration-transfer}
\end{equation}
The cross term vanishes conditional on $X$. Even $g=\eta_S$ therefore incurs target excess Brier risk $\mathbb{E}_T[(\eta_S-\eta_T)^2]$ where both conditionals are defined. Source-only calibration is evaluated under this distributional distinction.

\subsection{External retrieval calibration}
ArguAna supplies 1,406 queries over 8,674 documents on Hugging Face \sources{arguanahf,beir}. The corpus and judgment revisions are pinned independently. Five judged-positive document IDs are absent from the corpus; all queries and judgments remain in the denominator and those queries have zero achievable credit. The manifest records their IDs, consistent with a previously reported corpus defect \sources{arguanaissue}. Identical-query document IDs are excluded for this counterargument retrieval task. SHA-256 ordering assigns 703 queries to fitting and 703 to parameter selection. SciFact and NFCorpus retain the untouched test sets of Appendix~\ref{sec:representations}.

The candidate set is the union of each native retriever's top 100 documents. Per-query full-corpus z scores form four features, with $\log(1+s)$ applied first to BM25. Equal total query weights fit L2-regularized logistic fusion; held-out ArguAna NLL selects $C\in\{0.01,0.1,1,10,100\}$. Source standardization and coefficients are frozen before target evaluation. Scalar isotonic maps fit each sigmoid-transformed score on the 703-query selection partition. Their fixed-link references and calibrated scores share candidate sets and tie-breaking. Positive judgments define $Y=1$; unjudged candidates define the evaluation value zero. The associated probability endpoint is judged-positive retrieval, with judgment incompleteness carried by that definition.
\begin{table}[H]\centering\small
\begin{tabular}{llrrr}\toprule
Dataset & Fusion & nDCG@10 [95\% CI] & Recall@10 & MRR@10\\\midrule
SciFact & Uniform & 0.707 [0.665, 0.748] & 0.855 & 0.663 \\
SciFact & ArguAna-fitted & 0.718 [0.677, 0.760] & 0.855 & 0.677 \\
NFCorpus & Uniform & 0.360 [0.323, 0.395] & 0.177 & 0.558 \\
NFCorpus & ArguAna-fitted & 0.357 [0.322, 0.393] & 0.175 & 0.559 \\
\bottomrule\end{tabular}
\caption{Held-out retrieval after source-fitted score fusion. Both methods rank the same candidate union. Query-bootstrap intervals contain 2,000 resamples; native single-retriever rankings remain in Table~\ref{tab:retrieval}.}\label{tab:calibrated-ranking}\end{table}
On SciFact, candidate-union recall is 99.3\%. Learned-minus-uniform differences are 0.0113 [0.0001, 0.0222] in nDCG@10, 0.0000 [-0.0150, 0.0133] in Recall@10, and 0.0142 [0.0014, 0.0280] in MRR@10. The selected source regularization is $C=0.01$ for both target evaluations.
On NFCorpus, candidate-union recall is 42.3\%. Learned-minus-uniform differences are -0.0024 [-0.0073, 0.0024] in nDCG@10, -0.0020 [-0.0062, 0.0024] in Recall@10, and 0.0008 [-0.0086, 0.0106] in MRR@10. The selected source regularization is $C=0.01$ for both target evaluations.
\begin{table}[H]\centering\small
\begin{tabular}{llrrrr}\toprule
Dataset & Retriever & Fixed BS & Isotonic BS & Fixed NLL & Isotonic NLL\\\midrule
SciFact & BM25 & 0.6023 & 0.0042 & 1.8113 & 0.0231 \\
SciFact & MiniLM & 0.6871 & 0.0041 & 2.0100 & 0.0230 \\
SciFact & Multi-QA & 0.6403 & 0.0042 & 1.8048 & 0.0293 \\
SciFact & MPNet & 0.6785 & 0.0041 & 2.0313 & 0.0237 \\
NFCorpus & BM25 & 0.4412 & 0.0450 & 1.5529 & 0.5910 \\
NFCorpus & MiniLM & 0.5888 & 0.0474 & 1.6852 & 0.5279 \\
NFCorpus & Multi-QA & 0.5697 & 0.0474 & 1.6250 & 0.5646 \\
NFCorpus & MPNet & 0.5682 & 0.0476 & 1.6388 & 0.6283 \\
\bottomrule\end{tabular}
\caption{Candidate-set probability measurement before and after ArguAna isotonic fitting, with equal query weights. Fixed denotes the prespecified sigmoid link on native z scores. Every scalar-calibrated top-10 ranking is unchanged.}\label{tab:retrieval-probability}\end{table}
SciFact uniform/fitted fusion Brier scores are 0.6729/0.0040, with NLL 1.8493/0.0194. The constant source-prior reference gives Brier 0.0044 and NLL 0.0283. Candidate counts, source bin edges, ECE, coefficients and paired per-query losses accompany the result files.
NFCorpus uniform/fitted fusion Brier scores are 0.5639/0.0460, with NLL 1.5455/0.2991. The constant source-prior reference gives Brier 0.0476 and NLL 0.2619. Candidate counts, source bin edges, ECE, coefficients and paired per-query losses accompany the result files.
\subsection{External diagnostic calibration}
The logfit-project Hugging Face releases provide Thunderbird as fitting system and BGL as held-out system \sources{thunderbirdhf,bglhf,loghub}. Four source shards and both target shards are selected by a frozen filename-hash ordering. Each contributes 128 hash-selected nonoverlapping 32-event windows, restored to temporal order. The sample contains 16384 Thunderbird events (668 anomalous) and 8192 BGL events (625 anomalous). The labels originate in alert categories projected by the dataset maintainer. Raw messages, hosts, users and textual alert labels are excluded from model input. Thunderbird lacks a severity column; its missing value maps to informational severity. BGL supplies severity explicitly.

The original debugger runs unchanged with lexical, MiniLM and the pinned 1.5B Jina IQ4\_XS representations. The exact central QA input contract, pooling and dimensions remain those in the representation appendix. A read-only research observer records the seven channel scores after native fusion, and each selected sequence and score is checked against the uninstrumented pipeline. Four-fold leave-one-source-shard-out NLL chooses logistic regularization. Source out-of-fold F1 selects a separate operating threshold. The final model refits all source events; BGL labels enter only evaluation.

Native selection uses the original DPP. Score-$k$ orders native fused scores while preserving each window's native count $k$; isotonic-$k$ adds a monotone calibration with score-preserving ties. Channel-$k$ selects the same count using source-fitted probabilities. Channel-threshold applies the source-selected operating point and reports its changed count. For selected set $S_w$ and anomaly set $A_w$, pooled precision and recall are $\sum_w|S_w\cap A_w|/\sum_w|S_w|$ and $\sum_w|S_w\cap A_w|/\sum_w|A_w|$. Window recall is the mean of $|S_w\cap A_w|/|A_w|$ over anomaly-containing windows.
\begin{table}[H]\centering\small
\begin{tabular}{llrrrrl}\toprule
Input & Rule & Selected & Hits & Precision & Recall & Window recall [95\% CI]\\\midrule
Lexical & Native DPP & 544 & 53 & 9.7 & 8.5 & 15.00 [6.25, 25.87] \\
Lexical & Score-$k$ / isotonic-$k$ & 544 & 56 & 10.3 & 9.0 & 16.52 [6.25, 29.02] \\
Lexical & Channel-$k$ & 544 & 45 & 8.3 & 7.2 & 7.36 [4.66, 11.70] \\
Lexical & Channel-threshold & 7426 & 564 & 7.6 & 90.2 & 89.89 [87.42, 92.08] \\
MiniLM & Native DPP & 544 & 53 & 9.7 & 8.5 & 15.00 [6.25, 25.87] \\
MiniLM & Score-$k$ / isotonic-$k$ & 544 & 56 & 10.3 & 9.0 & 16.52 [6.25, 29.02] \\
MiniLM & Channel-$k$ & 544 & 48 & 8.8 & 7.7 & 12.25 [5.43, 22.30] \\
MiniLM & Channel-threshold & 7895 & 606 & 7.7 & 97.0 & 97.28 [96.74, 97.83] \\
Jina 1.5B & Native DPP & 544 & 53 & 9.7 & 8.5 & 15.00 [6.25, 25.87] \\
Jina 1.5B & Score-$k$ / isotonic-$k$ & 544 & 56 & 10.3 & 9.0 & 16.52 [6.25, 29.02] \\
Jina 1.5B & Channel-$k$ & 544 & 46 & 8.5 & 7.4 & 11.70 [5.43, 21.74] \\
Jina 1.5B & Channel-threshold & 7846 & 619 & 7.9 & 99.0 & 99.05 [98.23, 99.73] \\
\bottomrule\end{tabular}
\caption{Thunderbird-to-BGL diagnostic transfer on 256 held-out windows. Precision and recall are percentages. Score-$k$ and isotonic-$k$ have identical selected IDs. Window intervals resample complete anomaly-containing windows; pooled metric intervals are archived separately.}\label{tab:calibrated-logs}\end{table}
\begin{table}[H]\centering\small
\begin{tabular}{llrrr}\toprule
Input & Probability rule & Brier & NLL & ECE\\\midrule
Lexical & Native score reference & 0.1112 & 1.0896 & 0.1241 \\
Lexical & Source isotonic & 0.0717 & 0.2823 & 0.0355 \\
Lexical & Source channels & 0.0715 & 0.2820 & 0.0256 \\
MiniLM & Native score reference & 0.1112 & 1.0896 & 0.1241 \\
MiniLM & Source isotonic & 0.0717 & 0.2824 & 0.0355 \\
MiniLM & Source channels & 0.0665 & 0.2488 & 0.0857 \\
Jina 1.5B & Native score reference & 0.1112 & 1.0896 & 0.1241 \\
Jina 1.5B & Source isotonic & 0.0717 & 0.2823 & 0.0355 \\
Jina 1.5B & Source channels & 0.0688 & 0.2684 & 0.0885 \\
\bottomrule\end{tabular}
\caption{Probability transfer on all 8,192 BGL events. Native fused values are a heuristic score reference; isotonic and channel probabilities are fitted on external labels. All conditions use source-derived reliability bins.}\label{tab:log-probability}\end{table}
Lexical selects $C=0.01$ and threshold 0.02522; channel-$k$ minus native window recall is -7.65 [-18.83, 0.00] percentage points.
MiniLM selects $C=0.01$ and threshold 0.02610; channel-$k$ minus native window recall is -2.76 [-8.00, 0.00] percentage points.
Jina 1.5B selects $C=0.01$ and threshold 0.02781; channel-$k$ minus native window recall is -3.30 [-9.08, 0.00] percentage points.
The constant Thunderbird anomaly prior is 0.04077; on BGL it gives Brier 0.0717 and NLL 0.2826. The differing source and target metadata schemas create a concrete representation shift. Cross-system transfer is therefore compared against both native selection and this source-prior probability reference.
Jina channel fitting reduces probability loss but retrieves 46 labelled anomalies at the native count, compared with 53 before fitting. Its threshold rule reaches 99.0\% recall by selecting 7846 of 8192 events (95.8\%). The observed recall increase is thus accompanied by near-exhaustive selection. Proper probability scoring, sparse diagnostic utility and operating-point coverage are empirically distinct outcomes.
\subsection{Observability and representational drift}
Let $M$ be the retained metadata projection and $E=f(M)$ its embedding. Every classifier restricted to $M$ assigns a single label to events with identical metadata. In the recorded finite sample, its minimum classification error is
\begin{equation}
\widehat{R}_{M}^{*}=\frac1n\sum_m\min\{n_{m,0},n_{m,1}\}.
\label{eq:metadata-floor}
\end{equation}
Each metadata group contributes the minority count under its optimal constant decision, which proves the expression. Deterministic embeddings can merge groups but cannot separate identical inputs. For finite variables the corresponding data-processing inequality is $I(Y;E)\le I(Y;M)$. Temporal channels add information beyond this single-event projection; Equation~\ref{eq:metadata-floor} characterizes the narrower representation-only interface.
Thunderbird has 68 distinct projected metadata strings, 1 with both labels, and an empirical representation-only error floor of 4.1\%.
BGL has 8 distinct projected metadata strings, 2 with both labels, and an empirical representation-only error floor of 6.9\%.
These results connect diagnostic visibility to the security model. A source update absent from the publication context is unobservable to its authorization predicate; an event distinction removed before embedding is unobservable to the embedding function. Calibration estimates label frequencies over what remains observable. Its effectiveness under changed evidence and changed systems is measured separately from the completeness of the underlying state representation.

%% file: representation-results.tex
\section{Representation and Diagnostic Experiments}

\label{sec:representations}

\subsection{Embedding computation and retrieval judgments}

Retrieval and diagnostic representations are evaluated separately from publication authority. We run three pinned Hugging Face models: \code{all-MiniLM-L6-v2} (384 dimensions), \code{multi-qa-MiniLM-L6-cos-v1} (384), and \code{all-mpnet-base-v2} (768) \sources{minilmcard,multiqacard,mpnetcard}. All use native Transformers 4.57.6, PyTorch 2.12.0, float32, model evaluation mode, seed 1729 and local Apple MPS. Dataset and weight-file SHA-256 hashes, full repository revisions, vector hashes, reference vectors and ranked document IDs are recorded.

For hidden states $h_1,\ldots,h_T$ and attention mask $m_t$, the representation is $v=(\sum_t m_th_t)/(\sum_t m_t)$, then $e=v/\|v\|_2$. Valid special tokens participate in the mean. Raw queries and document title--newline--body strings are encoded independently with a shared 256-token envelope and batch size 32. Exact cosine similarity $e_q^\top e_d$ ranks the complete corpus; ties follow lexicographic document ID. A BM25 reference uses the full document text, Unicode alphanumeric lowercase tokens, $k_1=1.2$, $b=0.75$.

Official BEIR test judgments on Hugging Face supply 300 SciFact queries over 5,183 documents and 323 NFCorpus queries over 3,633 documents \sources{beir,scifacthf,nfcorpushf}. We report mean nDCG@10 with linear relevance gains, Recall@10 and MRR@10. Intervals use 2,000 query bootstrap resamples, with shared resamples for model differences.

\begin{table}[H]\centering\small\begin{tabular}{@{}llrrr@{}}\toprule Dataset & Retriever & nDCG@10 [95\% CI] & Recall@10 & MRR@10 \\ \midrule

scifact & MiniLM & 64.5 [59.9, 69.1] & 78.3 & 60.5 \\

scifact & MPNet & 64.1 [59.6, 68.6] & 77.7 & 60.2 \\

scifact & BM25 & 66.3 [61.6, 70.8] & 78.8 & 63.1 \\

scifact & Multi-QA MiniLM & 54.6 [49.7, 59.3] & 64.9 & 51.9 \\

nfcorpus & MiniLM & 31.7 [28.2, 35.1] & 15.5 & 50.8 \\

nfcorpus & MPNet & 33.5 [30.0, 36.9] & 16.5 & 52.6 \\

nfcorpus & BM25 & 30.6 [27.2, 34.2] & 14.9 & 51.4 \\

nfcorpus & Multi-QA MiniLM & 29.2 [25.9, 32.6] & 13.9 & 48.6 \\

\bottomrule\end{tabular}\caption{Independent BEIR relevance judgments, metrics multiplied by 100. Dense models share a 256-token encoding envelope; BM25 indexes full documents. Every test query ranks the complete corpus.}\label{tab:retrieval}\end{table}

BM25 has the highest point nDCG on SciFact, while MPNet has the highest on NFCorpus. The model ordering is dataset-dependent. Per-query paired differences, truncation counts and all configuration hashes accompany the result; retrieval scores describe ranking of judged evidence candidates.

\subsection{Deployment Jina representation in metadata-only diagnosis}

Lightcap uses \code{jina-code-embeddings-1.5b} for structural-context and run-event ranking \sources{jinacode,jinacard}. We copy the unchanged semantic debugger, structural ranker, temporal attribution and embedding module at the pinned Lightcap commit. Only imports and the embedding transport are adapted. The QA query prefix requests the most relevant answer to a question; the passage prefix identifies a candidate answer. The model-facing strings are generated by the original central contract.

The diagnostic experiment uses the production IQ4\_XS GGUF, pinned at revision \code{2330b9417fd0} \sources{jinaggufcard}; the artifact records and verifies the full revision and file SHA-256. Existing llama-cpp-python 0.3.16 produces last-token embeddings. The original 1,536-vector is truncated to its first 512 components and L2-normalized, matching the application adapter. All 130 unique model-facing strings fit within 256 tokens; the maximum observed length is 20. Encoding took 5.85 seconds on the recorded local runtime.

External labels come from all 2,000 records of Loghub\textquotesingle s BGL structured sample, containing 143 anomaly-labelled records \sources{loghub}. The original record order is split into 63 consecutive windows of at most 32 records. Component, event ID, severity and timestamp enter the run-event schema; anomaly labels, raw log messages, templates, hosts and addresses stay outside model input. Duration fields are absent. A fixed operational-failure query is issued for every window. The same copied pipeline runs with lexical ranking, MiniLM vectors, or deployment Jina vectors. The 31-event lexical frontier, 12-node MMR selection, temporal attribution, entropy weighting and DPP selection remain unchanged.

\begin{table}[H]\centering\small\begin{tabular}{@{}lrrrr@{}}\toprule Representation & Selected hits & Precision & Anomaly recall & Window recall [95\% CI] \\ \midrule

MiniLM & 44/441 & 10.0 & 30.8 & 66.4 [51.2, 81.7] \\

Jina 1.5B IQ4\_XS & 44/441 & 10.0 & 30.8 & 66.4 [51.2, 81.7] \\

Lexical & 44/441 & 10.0 & 30.8 & 66.4 [51.2, 81.7] \\

\bottomrule\end{tabular}\caption{Unchanged Lightcap debugger on public BGL metadata. Precision and anomaly recall use pooled selected events; window recall averages the anomaly-containing windows. Brackets are paired window-bootstrap intervals. Labels identify anomalous events, while causal parentage remains unlabelled.}\label{tab:debugger}\end{table}

There are 21 anomaly-containing windows. Severity-first selection at the same per-window selection sizes has 46.2\% pooled recall, and uniform selection has 24.0\% expected recall. All three representations select 44 of 143 labelled anomalies: this projection yields no observed improvement in candidate recall from Jina. Mean chronology recall on positive windows is 90.5\% with lexical ranking and 94.0\% with Jina. The chronology is a variable-size diagnostic neighborhood, so that difference is reported separately from fixed-selection recall. Per-window selections, channel weights, convergence and representation differences remain in the artifact.

%% file: transport-results.tex
\section{Integrity-Preserving Evidence Transport}
\label{sec:transport}
Evidence serialization introduces a representation change between a recorded decision and its later audit. We evaluate lossless transport on all 80 archived result files (83,585,803 bytes), including verdicts, replay journals, diagnostic scores and embedding features. GZIP uses DEFLATE's dictionary and Huffman coding \sources{deflateformat,gzipformat}; XZ carries LZMA2 streams \sources{xzformat}; Zstandard supplies a separate dictionary/entropy-coding baseline \sources{zstdformat}. Each file is encoded independently, preserving its logical path and exact bytes.

Three trials rotate codec order with gzip~\code{-9 -n}, xz~\code{-6 --threads=1}, and zstd~\code{-6 --threads=1}. Table~\ref{tab:transport} reports total stream sizes and median aggregate wall times on macOS~26.3/arm64, Python~3.14.2, Apple gzip~475, XZ~5.8.3 and Zstandard~1.5.7. Timings include process startup and pipes with in-memory inputs. All 720 file round trips recover identical bytes; compressed bytes are also stable across trials. XZ reduces the result footprint by 76.57\%; Zstandard encodes faster at a larger footprint. This single-corpus comparison determines the distributed encoding.

\begin{table}[H]
\centering\small
\caption{Lossless result transport. MB denotes $10^6$ bytes; times are aggregate seconds for 80 files.}
\label{tab:transport}
\begin{tabular}{lrrr}
\toprule
Encoding & Size (MB) & Encode (s) & Decode (s)\\
\midrule
GZIP/DEFLATE & 21.782 & 2.143 & 0.414\\
XZ/LZMA2 & 19.587 & 9.423 & 0.753\\
Zstandard & 21.085 & 0.902 & 0.513\\
\bottomrule
\end{tabular}
\end{table}

Let $C$ and $D$ denote encoding and decoding, and $P$ the publication predicate at fixed context $\gamma$. Losslessness preserves predicate evaluation:
\begin{equation}
D(C(x))=x\quad\Longrightarrow\quad P(D(C(x)),\gamma)=P(x,\gamma).
\end{equation}
The transport manifest records canonical paths, encoded and decoded lengths, and SHA-256 digests for both representations. Relative to a trusted manifest, the decoder checks the encoded object, caps decoded length, verifies the restored digest and stages the complete set before exposing it. It rejects duplicate paths, symlinks, trailing streams and existing destinations. Fifteen regression tests exercise exact recovery and malformed-input rejection. Decoder-memory and aggregate-output budgets are independently configurable. Manifest authenticity is supplied by the distribution trust boundary; byte restoration leaves semantic approval and context freshness to their existing checks.

These results establish a whole-file reference point for integrity-preserving recovery. A subsequent comparison can measure whether independently verified blocks reduce recovery latency under partial corruption, while retaining the same content-bound publication predicate and authenticated context.

%% file: references.bib
@misc{rag,
 author={Patrick Lewis and Ethan Perez and Aleksandra Piktus and Fabio Petroni and Vladimir Karpukhin and Naman Goyal and Heinrich K{\"u}ttler and Mike Lewis and Wen-tau Yih and Tim Rockt{\"a}schel and Sebastian Riedel and Douwe Kiela},
 title={Retrieval-Augmented Generation for Knowledge-Intensive {NLP} Tasks}, year={2020},
 howpublished={arXiv:2005.11401, version 4}, url={https://arxiv.org/abs/2005.11401v4}}

@misc{react,
 author={Shunyu Yao and Jeffrey Zhao and Dian Yu and Nan Du and Izhak Shafran and Karthik Narasimhan and Yuan Cao},
 title={{ReAct}: Synergizing Reasoning and Acting in Language Models}, year={2023},
 howpublished={ICLR 2023; arXiv:2210.03629, version 3}, url={https://arxiv.org/abs/2210.03629v3}}

@misc{toolformer,
 author={Timo Schick and Jane Dwivedi-Yu and Roberto Dess{\`i} and Roberta Raileanu and Maria Lomeli and Luke Zettlemoyer and Nicola Cancedda and Thomas Scialom},
 title={Toolformer: Language Models Can Teach Themselves to Use Tools}, year={2023},
 howpublished={arXiv:2302.04761}, url={https://arxiv.org/abs/2302.04761}}

@misc{indirect,
 author={Kai Greshake and Sahar Abdelnabi and Shailesh Mishra and Christoph Endres and Thorsten Holz and Mario Fritz},
 title={Not What You've Signed Up For: Compromising Real-World {LLM}-Integrated Applications with Indirect Prompt Injection}, year={2023},
 howpublished={arXiv:2302.12173}, url={https://arxiv.org/abs/2302.12173}}

@misc{injecagent,
 author={Qiusi Zhan and Zhixiang Liang and Zifan Ying and Daniel Kang},
 title={{InjecAgent}: Benchmarking Indirect Prompt Injections in Tool-Integrated Large Language Model Agents}, year={2024},
 howpublished={arXiv:2403.02691}, url={https://arxiv.org/abs/2403.02691}}

@misc{agentdojo,
 author={Edoardo Debenedetti and Jie Zhang and Mislav Balunovi{\'c} and Luca Beurer-Kellner and Marc Fischer and Florian Tram{\`e}r},
 title={{AgentDojo}: A Dynamic Environment to Evaluate Prompt Injection Attacks and Defenses for {LLM} Agents}, year={2024},
 howpublished={arXiv:2406.13352, version 3}, url={https://arxiv.org/abs/2406.13352v3}}

@misc{camel,
 author={Edoardo Debenedetti and Ilia Shumailov and Tianqi Fan and Jamie Hayes and Nicholas Carlini and Daniel Fabian and Christoph Kern and Chongyang Shi and Andreas Terzis and Florian Tram{\`e}r},
 title={Defeating Prompt Injections by Design}, year={2025},
 howpublished={arXiv:2503.18813, version 2}, url={https://arxiv.org/abs/2503.18813v2}}

@misc{fides,
 author={Manuel Costa and Boris K{\"o}pf and Aashish Kolluri and Andrew Paverd and Mark Russinovich and Ahmed Salem and Shruti Tople and Lukas Wutschitz and Santiago Zanella-B{\'e}guelin},
 title={Securing {AI} Agents with Information-Flow Control}, year={2025},
 howpublished={arXiv:2505.23643, version 2}, url={https://arxiv.org/abs/2505.23643v2}}

@misc{agentspec,
 author={Haoyu Wang and Christopher M. Poskitt and Jun Sun},
 title={{AgentSpec}: Customizable Runtime Enforcement for Safe and Reliable {LLM} Agents}, year={2025},
 howpublished={arXiv:2503.18666, version 3; accepted to ICSE 2026}, url={https://arxiv.org/abs/2503.18666v3}}

@misc{toctou,
 author={Derek Lilienthal and Sanghyun Hong},
 title={Mind the Gap: Time-of-Check to Time-of-Use Vulnerabilities in {LLM}-Enabled Agents}, year={2025},
 howpublished={arXiv:2508.17155}, url={https://arxiv.org/abs/2508.17155}}

@misc{pcaa,
 author={Zexun Wang},
 title={Proof-Carrying Agent Actions: Model-Agnostic Runtime Governance for Heterogeneous Agent Systems}, year={2026},
 howpublished={arXiv:2606.04104, version 1}, url={https://arxiv.org/abs/2606.04104v1}}

@misc{cava,
 author={Zexun Wang},
 title={{CAVA}: Canonical Action Verification and Attestation for Runtime Governance of Agentic {AI} Systems}, year={2026},
 howpublished={arXiv:2607.13716, version 1}, url={https://arxiv.org/abs/2607.13716v1}}

@misc{spa,
 author={Dylan Girrens and Guangjing Wang},
 title={{SPA}: Securing Persistent {LLM} Agents Across Queries with Plan-First Information-Flow Control}, year={2026},
 howpublished={arXiv:2608.27234, version 2}, url={https://arxiv.org/abs/2608.27234v2}}

@misc{attacker,
 author={Milad Nasr and Nicholas Carlini and Chawin Sitawarin and Sander V. Schulhoff and Jamie Hayes and Michael Ilie and Juliette Pluto and Shuang Song and Harsh Chaudhari and Ilia Shumailov and Abhradeep Thakurta and Kai Yuanqing Xiao and Andreas Terzis and Florian Tram{\`e}r},
 title={The Attacker Moves Second: Stronger Adaptive Attacks Bypass Defenses Against {LLM} Jailbreaks and Prompt Injections}, year={2025},
 howpublished={arXiv:2510.09023, version 1}, url={https://arxiv.org/abs/2510.09023v1}}

@misc{firewalls,
 author={Rishika Bhagwatkar and Kevin Kasa and Abhay Puri and Gabriel Huang and Irina Rish and Graham W. Taylor and Krishnamurthy Dj Dvijotham and Alexandre Lacoste},
 title={Indirect Prompt Injections: Are Firewalls All You Need, or Stronger Benchmarks?}, year={2026},
 howpublished={arXiv:2510.05244, version 2}, url={https://arxiv.org/abs/2510.05244v2}}

@article{saltzer,
 author={Jerome H. Saltzer and Michael D. Schroeder},
 title={The Protection of Information in Computer Systems}, year={1975},
 journal={Proceedings of the IEEE}, volume={63}, number={9}, pages={1278--1308},
 url={https://www.cs.virginia.edu/~evans/cs551/saltzer/}}

@article{robust,
 author={Andrew C. Myers and Andrei Sabelfeld and Steve Zdancewic},
 title={Enforcing Robust Declassification and Qualified Robustness}, year={2006},
 journal={Journal of Computer Security}, volume={14}, number={2}, pages={157--196},
 url={https://www.cs.cornell.edu/andru/papers/robdecl-jcs/}}

@article{entropy,
 author={Sebastian Farquhar and Jannik Kossen and Lorenz Kuhn and Yarin Gal},
 title={Detecting Hallucinations in Large Language Models Using Semantic Entropy},
 journal={Nature}, volume={630}, pages={625--630}, year={2024},
 doi={10.1038/s41586-024-07421-0}, url={https://www.nature.com/articles/s41586-024-07421-0}}

@inproceedings{ragtruth,
  title={RAGTruth: A Hallucination Corpus for Developing Trustworthy Retrieval-Augmented Language Models},
  author={Niu, Cheng and Wu, Yuanhao and Zhu, Juno and Xu, Siliang and Shum, KaShun and Zhong, Randy and Song, Juntong and Zhang, Tong},
  booktitle={Proceedings of the 62nd Annual Meeting of the Association for Computational Linguistics (Volume 1: Long Papers)},
  year={2024},
  pages={10862--10878},
  doi={10.18653/v1/2024.acl-long.585},
  url={https://aclanthology.org/2024.acl-long.585/}
}

@misc{ministralthree,
  author={{Mistral AI}}, title={Ministral 3 3B: Model Documentation},
  year={2025}, note={Dated model ministral-3b-2512; documentation inspected September 9, 2026},
  url={https://docs.mistral.ai/models/ministral-3-3b-25-12}
}

@misc{bipia,
 author={Jingwei Yi and Yueqi Xie and Bin Zhu and Emre Kiciman and Guangzhong Sun and Xing Xie and Fangzhao Wu},
 title={Benchmarking and Defending Against Indirect Prompt Injection Attacks on Large Language Models},
 year={2025}, howpublished={KDD 2025; arXiv:2312.14197v4}, doi={10.1145/3690624.3709179}, url={https://arxiv.org/abs/2312.14197v4}}

@misc{beir,
 author={Nandan Thakur and Nils Reimers and Andreas R{\"u}ckl{\'e} and Abhishek Srivastava and Iryna Gurevych},
 title={{BEIR}: A Heterogenous Benchmark for Zero-shot Evaluation of Information Retrieval Models},
 year={2021}, howpublished={NeurIPS Datasets and Benchmarks; arXiv:2104.08663v4}, url={https://arxiv.org/abs/2104.08663v4}}

@misc{minilmcard, author={{Sentence Transformers}}, title={all-MiniLM-L6-v2: Model Card}, year={2026},
 note={Revision 1110a243fdf4706b3f48f1d95db1a4f5529b4d41}, url={https://huggingface.co/sentence-transformers/all-MiniLM-L6-v2}}

@misc{multiqacard, author={{Sentence Transformers}}, title={multi-qa-MiniLM-L6-cos-v1: Model Card}, year={2026},
 note={Revision b20736733232; full pin in the artifact}, url={https://huggingface.co/sentence-transformers/multi-qa-MiniLM-L6-cos-v1}}

@misc{jinaggufcard,author={{Jina AI}},title={{jina-code-embeddings-1.5b-GGUF}: Quantized model repository},year={2026},
 note={IQ4\_XS, revision 2330b9417fd0; full pin and file digest in the artifact},url={https://huggingface.co/jinaai/jina-code-embeddings-1.5b-GGUF}}

@misc{mpnetcard, author={{Sentence Transformers}}, title={all-mpnet-base-v2: Model Card}, year={2026},
 note={Revision e8c3b32edf5434bc2275fc9bab85f82640a19130}, url={https://huggingface.co/sentence-transformers/all-mpnet-base-v2}}

@misc{scifacthf, author={{BEIR}}, title={SciFact Corpus, Queries and Relevance Judgments}, year={2026},
 note={Pinned Hugging Face dataset and qrels revisions in the artifact}, url={https://huggingface.co/datasets/BeIR/scifact}}

@misc{nfcorpushf, author={{BEIR}}, title={NFCorpus Corpus, Queries and Relevance Judgments}, year={2026},
 note={Pinned Hugging Face dataset and qrels revisions in the artifact}, url={https://huggingface.co/datasets/BeIR/nfcorpus}}

@misc{jinacode,
 author={Daria Kryvosheieva and Saba Sturua and Michael G{\"u}nther and Scott Martens and Han Xiao},
 title={Efficient Code Embeddings from Code Generation Models}, year={2025}, howpublished={arXiv:2508.21290}, url={https://arxiv.org/abs/2508.21290}}

@misc{jinacard,author={{Jina AI}},title={jina-code-embeddings-1.5b: Model Card and Task Configuration},year={2026},
 note={Revision 39aeb4fb9b60f930934c78ae5d749a46287c248a},url={https://huggingface.co/jinaai/jina-code-embeddings-1.5b}}

@misc{loghub,
 author={Jieming Zhu and Shilin He and Pinjia He and Jinyang Liu and Michael R. Lyu},
 title={Loghub: A Large Collection of System Log Datasets for {AI}-driven Log Analytics},year={2023},
 howpublished={ISSRE 2023; arXiv:2008.06448v3},url={https://arxiv.org/abs/2008.06448v3}}

@misc{squad,
 author={Pranav Rajpurkar and Jian Zhang and Konstantin Lopyrev and Percy Liang},
 title={{SQuAD}: 100,000+ Questions for Machine Comprehension of Text},year={2016},
 howpublished={EMNLP 2016; arXiv:1606.05250},url={https://arxiv.org/abs/1606.05250}}

@misc{squadhf,author={Pranav Rajpurkar},title={{SQuAD}: Hugging Face Dataset},year={2026},
 note={Revision 7b6d24c440a36b6815f21b70d25016731768db1f},url={https://huggingface.co/datasets/rajpurkar/squad}}

@misc{ministraleight,
  author={{Mistral AI}}, title={Ministral 3 8B: Model Documentation},
  year={2025}, note={Dated model ministral-8b-2512; documentation inspected September 9, 2026},
  url={https://docs.mistral.ai/models/ministral-3-8b-25-12}
}

@inproceedings{guocalibration,
 author={Chuan Guo and Geoff Pleiss and Yu Sun and Kilian Q. Weinberger},
 title={On Calibration of Modern Neural Networks},booktitle={Proceedings of the 34th International Conference on Machine Learning},
 series={Proceedings of Machine Learning Research},volume={70},pages={1321--1330},year={2017},url={https://proceedings.mlr.press/v70/guo17a.html}}

@inproceedings{nixoncalibration,
 author={Jeremy Nixon and Michael W. Dusenberry and Linchuan Zhang and Ghassen Jerfel and Dustin Tran},
 title={Measuring Calibration in Deep Learning},booktitle={IEEE/CVF Conference on Computer Vision and Pattern Recognition Workshops},
 pages={38--41},year={2019},url={https://openaccess.thecvf.com/content_CVPRW_2019/papers/Uncertainty%20and%20Robustness%20in%20Deep%20Visual%20Learning/Nixon_Measuring_Calibration_in_Deep_Learning_CVPRW_2019_paper.pdf}}

@misc{arguanahf,author={{BEIR}},title={ArguAna Corpus, Queries and Relevance Judgments},year={2026},
 note={Corpus revision 9bcf8fc0320c; qrels revision ae5468c6f1c1; full pins in the artifact},url={https://huggingface.co/datasets/BeIR/arguana}}

@misc{arguanaissue,author={{BEIR contributors}},title={Missing document from corpus ArguAna, issue 172},year={2024},url={https://github.com/beir-cellar/beir/issues/172}}

@misc{thunderbirdhf,author={{logfit-project}},title={Thunderbird System Logs: Hugging Face Dataset},year={2026},
 note={Revision 6b4a6401591e85a534c401a49355553631133bd0},url={https://huggingface.co/datasets/logfit-project/Thunderbird}}

@misc{bglhf,author={{logfit-project}},title={BGL System Logs: Hugging Face Dataset},year={2026},
 note={Revision 09a26419fc644a2f96e96d18080397f0e212c8a6},url={https://huggingface.co/datasets/logfit-project/BGL}}

@misc{ministralfourteen,
  author={{Mistral AI}}, title={Ministral 3 14B: Model Documentation},
  year={2025}, note={Dated model ministral-14b-2512; documentation inspected September 9, 2026},
  url={https://docs.mistral.ai/models/ministral-3-14b-25-12}
}

@techreport{gzipformat,
 author={Peter Deutsch},title={{GZIP} File Format Specification Version 4.3},
 institution={RFC Editor},number={RFC 1952},year={1996},doi={10.17487/RFC1952},
 url={https://www.rfc-editor.org/rfc/rfc1952}}

@techreport{deflateformat,
 author={Peter Deutsch},title={{DEFLATE} Compressed Data Format Specification Version 1.3},
 institution={RFC Editor},number={RFC 1951},year={1996},doi={10.17487/RFC1951},
 url={https://www.rfc-editor.org/rfc/rfc1951}}

@misc{xzformat,
 author={Lasse Collin and Igor Pavlov},title={The .xz File Format},
 year={2024},note={Version 1.2.1, April 8},
 url={https://tukaani.org/xz/xz-file-format-1.2.1.txt}}

@techreport{zstdformat,
 author={Yann Collet and Murray Kucherawy},title={Zstandard Compression and the `application/zstd' Media Type},
 institution={RFC Editor},number={RFC 8878},year={2021},doi={10.17487/RFC8878},
 url={https://www.rfc-editor.org/rfc/rfc8878}}
